\newcommand{\R}{\mathbb R}
\newcommand{\N}{\mathbb N}
\def\be#1\ee{\begin{equation}#1\end{equation}}
\newcommand{\fer}[1]{(\ref{#1})}
\newtheorem{theorem}{Theorem}[section]
\newtheorem{lemma}[theorem]{Lemma}
\newtheorem{remark}[theorem]{Remark}
\numberwithin{equation}{section}
\newcommand{\bq}{\begin{equation}}
\newcommand{\eq}{\end{equation}}
\newenvironment{equations}{\equation\aligned}{\endaligned\endequation}
\def\bqa{\begin{eqnarray}}
\def\eqa{\end{eqnarray}}
\def\e{\epsilon}
\def\t{\tau}
\newcommand{\bd}{\begin{displaymath}}
\newcommand{\ed}{\end{displaymath}}
\newcommand{\ba}{\begin{eqnarray}}
\newcommand{\ea}{\end{eqnarray}}
\def\ff{\widehat f}
\def\gg{\widehat g}
\def\hh{\widehat h}
\def\N{\mathbb{N}}
\def\R{\mathbb{R}}
\newcommand{\setR}{\mathbb{R}}
\theoremstyle{plain}
\title{On  Fourier-based inequality indices}
\author{Giuseppe Toscani}
\thanks{Department of Mathematics  of the University of Pavia, and IMATI CNR, Pavia, Italy.  \\ e.mail: {\tt giuseppe.toscani@unipv.it}. 
}
\date{\today}
\begin{document}
\maketitle

\begin{center}\small
\parbox{0.85\textwidth}{

\textbf{Abstract.} 
Originally developed for measuring the heterogeneity of wealth measures, inequality indices are quantitative scores that take values in the unit interval, with the zero score characterizing perfect equality.  In this paper, we draw attention to a new inequality index, based on the Fourier transform, which exhibits a number of interesting properties that make it very promising in applications. As a by-product, it is shown that other inequality measures, including Gini and Pietra indices can be fruitfully expressed in terms of the Fourier transform, which allows to enlighten properties in a new and simple way.


\medskip

\textbf{Keywords.} Inequality measures; Pietra and Gini indices; Fourier transform; Sub-additivity for convolutions.}
\end{center}

\section{Introduction}

Denote by $P_s(\R)$, $s \ge 1$, the class of all probability measures $F$ on the Borel subsets of $\R$  such that
\[
m_s(F)  = \int_{\R} |x|^s dF(x) < + \infty.
\]
Further, denote by $ \tilde P_s (\R)$ the class of probability measures $F \in P_s(\R)$ which possess a positive mean value
\[
m(F) =  \int_{\R} x \, dF(x) >0, 
\]
and with $P_s^+(\R)$ the subset of probability measures $F \in P_s(\R)$ such that $F(x) = 0$ for $x \le 0$. 
Let $\mathcal F_s$ be the set of Fourier transforms  
\[
\ff(\xi) = \int_{\R}  e^{-i\xi x} \, dF(x).
\]
of probability measures $F$ in $\tilde P_s(\R)$.
On $\mathcal F_s$ we introduce an inequality index, named $T(F)$, given by the formula
\be\label{ine-T} 
T(F) = \frac 12 \sup_{\xi \in \R} \left| \ff(\xi) - \frac{\ff'(\xi)}{\ff'(0)} \right|,
\ee
In definition \fer{ine-T},  $\ff'(\xi)= \ff'_\xi(\xi)$ denotes the derivative of the Fourier transform $\ff(\xi)$ with respect to its argument $\xi$. Indeed,  $F \in P_s(\R)$ implies that  $\ff(\xi)$ is  continuously differentiable on the entire real line.  

It is immediate to show that the functional $T(F)$ is invariant with respect to the scaling (dilation)
\[
F(x) \to F(cx), \quad c >0.
\]

Let $F \in P_s^+(\R)$. Since $|\ff(\xi)| \le \ff(0) = 1$, and 
\[
\ff'(\xi) = -i \int_{\R_+}  x e^{-i\xi x} \, dF(x),
\]
so that $\ff'(0)= -i\, m(F) =-i\, m$, and $|\ff'(\xi)| \le |\ff'(0)| = m$, it is immediate to conclude, by the triangular inequality, that  $T(F)$ satisfies the bounds
\be\label{bounds}
0 \le T(F) \le 1,
\ee
and $T(F) = 0$ if and only if $\ff(\xi)$ satisfies the differential equation
\[
 \ff'(\xi) = {\ff'(0)}\, \ff(\xi),
\]
with $\ff(0) = 1$, so that the unique solution is given by $\ff(\xi) = e^{-im\xi}$, namely by the Fourier transform of a Dirac delta function located in the mean value $x =m(F)>0$. Note however that, even if the functional $F$ is defined in the whole class $\tilde P_s(\R)$,  the upper bound is lost if the probability measure $F \notin  P_s^+(\R)$, since in this case the inequality $|\ff'(\xi)/\ff'(0)| \le 1$ does not hold. 

 In the following, we will show that the functional $T(F)$ is a measure of inequality which satisfies most of the properties required to be  a good measure of sparsity and/or heterogeneity \cite{HR}.  
 The interest in having a measure of inequality  based on the Fourier transform, such as $T(F)$, is twofold. On the one hand, it is very simple to calculate the value taken by this measure at probability distributions for which the characteristic function is explicitly available. This is the case, among others, of the Poisson distribution and, for probability measures defined on the whole real line $\R$, of the stable laws.  On the other hand, in the case of dealing with a discrete probability measure, the use of the Fourier transform makes it possible to develop very fast computational procedures \cite{Au1,Au2}.

 As recently discussed in \cite{Ban,Eli,Eli3}, the challenge of measuring the statistical heterogeneity of  measures arises in most fields of science and engineering. In economics and social sciences size measures of interest are wealth measures, and in the context of wealth measures many inequality indices have been introduced \cite{BL,Cou,Cow,HN} . Specifically, inequality indices quantify the socio-economic divergence of a given wealth measures from the state of perfect equality. In economics, the most used measure of inequality is the Gini index \cite{Gini1,Gini2}. However, although it has had an economic origin, the use of the Gini index has not been limited to wealth alone \cite{HR}. 
 
 A second important index of inequality, still introduced in economics, is the Pietra index \cite{Pie}. As discussed in \cite{Eli2}, the Pietra index is an elemental measure of statistical heterogeneity which has a number of properties that render it not only an alternative to the popular Gini index, but rather, a far more natural and meaningful quantitative tool for the measurement of egalitarianism, and, consequently, for the measurement of statistical heterogeneity at large.

In addition, other indices have been introduced so far.   An alternative to Gini index was introduced by Bonferroni in 1930 in a textbook for students at Bocconi University in Milan \cite{Bon}. The main properties and representations  of Bonferroni index and its connections with the index of Gini and other measures were studied in \cite{Eli3}.  Also, it is important to mention the Kolkata index, first introduced in \cite{Gho} as a measure of  inequality,  whose connections with the Gini and Pietra indices have been studied in \cite{Ban, Ban2}. 

An indispensable tool for measuring income and wealth  inequality is the Lorenz function and its graphical representation, the Lorenz curve \cite{Lor}. The Lorenz curve plots the percentage of total income earned by the various sectors of the  population, ordered by the increasing size of their incomes.
The Lorenz curve is typically represented as a curve in the unit square of opposite vertices in the origin of the axes and the point $(1,1)$,  starting from the origin and ending at the point  $(1,1)$. 
  
The diagonal of the square exiting the origin is the line of perfect equality, representing a situation in which all individuals have the same income. Since the diagonal is the line of perfect equality, we can say that
 that the closer the Lorenz curve is to the diagonal, the more equal is the
 distribution of income. 
 
 This idea of \emph{closeness} between the line of perfect equality and the Lorenz curve can be expressed in many ways, each of which gives rise to a possible measure of  inequality.
  Thus, starting from the Lorenz curve, several indices of  inequality were defined, including the Gini index. Various indices were obtained by looking at the maximal distance between the line of perfect equality and the Lorenz curve, either horizontally or vertically, or alternatively parallel to the other diagonal of the unit square \cite{Eli}.
 
 From a certain point of view, the measure of inequality defined by \fer{ine-T} has many points of contact with the  inequality measures obtained from the Lorenz curve through the concept of \emph{maximum distance}.
 
In fact, the index \fer{ine-T} expresses the maximum value of the modulus of the difference between the Fourier transforms of a probability measure $F$ of positive mean $m$ and its derivative normalized by the mean. 
In the economic context, the closer the Fourier transform of the probability measure is to its derivative normalized by dividing it by the mean, the more equal is the distribution of income. In other words, the line of perfect equality in the Lorenz square is here substituted by the Fourier transform of a Dirac delta function located in a point different from zero.

It is interesting to note that, as will become clear from the examples, the maximum value is usually taken in the finite interval $(-2\pi, 2\pi)$.

Despite the enormous amount of research illustrating the fields of application of inequality indices, the use of arguments based on Fourier transforms appears rather limited. In particular, although the Gini index can be easily expressed in terms of the Fourier transform, at least to our knowledge, its expression in Fourier has never been considered in applications.
The same conclusion can be drawn for the Pietra index, whose expression in Fourier transform is very useful to understand its nature, and to introduce from that other Fourier-based measures of inequality, including the one considered in this paper.  

It is worth mentioning that, unlike the classical Gini and Pietra indices, neither the Bonferroni index nor the Kolkata index seem to be expressible in closed form in terms of the Fourier transform.

Before studying the new index $T(\cdot)$ defined in \fer{ine-T} and listing its properties, we will begin with a brief introduction to the use of the Fourier transform to express the classical Gini and Pietra indices.  This will be done in  Section  \ref{sec:Gini-Pietra}. As we shall see, the use of Fourier transform allows to clarify the functional setting where these indices live.

Next,
Section \ref{sec:new}  will  be devoted to the study of the main properties of the new inequality measure.  Various examples will be collected in Sections \ref{sec:examples} and  \ref{sec:kinetic}.

\section{A Fourier approach to Gini and Pietra indices} \label{sec:Gini-Pietra}

\subsection{A Fourier-based expression of Gini index} 
 In the rest of the paper, for any fixed constant  $a >0$, we will denote by $F_a(x)$ the Heaviside step  function defined by
 \be\label{he}
 F_a(x) : = \left\{
  \begin{array}{cc}
0 &x< a\\
  1 & x \ge a
  \end{array}
 \right.
 \ee
Clearly, $F_a(x)$ is the cumulative measure function of a random variable which is almost surely equal to $a$. It belongs to 
$P_s(\R)$ for any $s \ge 1$, and $m(F_a) = a$. 

To obtain an explicit expression in Fourier transform for the Gini index, which admits many equivalent formulations \cite{Xu}, we will resort to its well-known form in terms of a continuous probability measure. For a probability measure $F \in P_s^+(\R)$ with mean $m$, the Gini index is defined by the formula
\be\label{Gin}
G(F) = 1 - \frac 1m\int_{\R_+} (1- F(x))^2\, dx
\ee
Since $F \in P_s^+(\R)$, $F(x) = 0$ for $x \le 0$. Hence, resorting to the definition of the Heaviside step function  $F_0(x)$, we have the identity
\[
\int_{\R_+} (1 - F(x))^2 \, dx =\int_\R |F_0(x) - F(x)|^2 \, dx.
\]

For any given pair of probability measures $F,G \in P_s(\R)$,  Parseval formula implies
\be\label{Par}
\int_\R |F(x) - G(x)|^2 \, dx = \frac 1{2\pi} \int_\R |\widehat F(\xi) - \widehat G(\xi)|^2 \, d\xi,
\ee
 where $\widehat F$ and $ \widehat G$ are the Fourier transforms of the probability measures $F,G$. It holds
 \be\label{trans}
 \widehat F(\xi) - \widehat G(\xi) = \frac{\ff(\xi) -\gg(\xi)}{i\xi}.
 \ee
 Indeed, considering that $F(-\infty) - G(-\infty) = F(+\infty) - G(+\infty) =0$, integration by parts gives
 \begin{equations}\nonumber
& \int_\R (F(x) -G(x)) e^{-i\xi x} \, dx = \int_\R (F(x) -G(x))\frac d{d\xi}\left( \frac{e^{-i\xi x}}{-i\xi}\right)  \, dx = \\
& \left[ (F(x) -G(x)) \frac{e^{-i\xi x}}{-i\xi} \right]_{-\infty}^{+\infty} + \int_\R  \frac{e^{-i\xi x}}{i\xi} \, d(F(x) -G(x)) = \frac{\ff(\xi) -\gg(\xi)}{i\xi}.
 \end{equations}
 Consequently, we have the identity
 \be\label{Par2}
\int_\R |F(x) - G(x)|^2 \, dx = \frac 1{2\pi} \int_\R \frac{|\ff(\xi) -\gg(\xi)|^2}{\xi^2} \, d\xi.
\ee

Therefore, for any probability measure $F\in P_s^+(\R)$,  Gini index has  a simple expression in Fourier transform, given by
\be\label{Gin-F}
G(F) = 1 - \frac 1{2\pi m} \int_\R \frac{|1 -\ff(\xi)|^2}{\xi^2} \, d\xi.
\ee

\begin{remark}\label{rem:1}  For a given constant $q>0$, let  $\dot H_{-q}$  denote the homogeneous Sobolev space of fractional order with negative index ${-q}$, endowed with the norm
 \be\label{h2}
  \| h\|_{\dot H_{-q}} = \int_\R |\xi|^{-2q}|\widehat h(\xi)|^2 \, d\xi.
 \ee
Then, the variable part of the Gini index
coincides with the scaling invariant distance  between the probability measure $F$ and the the heaviside step function $F_0$ in the homogeneous Sobolev space  $\dot H_{-1}$.
\end{remark}

\begin{remark}\label{rem:2}  Considering that the value zero in \fer{Gin-F} is obtained when $\ff(\xi) = e^{-im\xi}$, namely when $F = F_m$, we can rewrite Gini index as
\be\label{G-diff}
G(F) =  \frac 1{2\pi m} \left[ \int_\R \frac{|1 - e^{-im\xi} |^2}{\xi^2} \, d\xi - \int_\R \frac{|1 -\ff(\xi)|^2}{\xi^2} \, d\xi\right].
\ee
\end{remark}

\subsection{Another Fourier-based inequality measure}\label{sec:H-dist}

Expression \fer{G-diff} suggests considering a related expression in which the dispersion of the probability measure $F$ of mean value $m>0$ coincides with its scale invariant $\dot H_{-1}$--distance from the Heaviside step function $F_m$ with the same mean value $m$. We define
\be\label{H-F}
H(F) =  \frac 1{2\pi m} \int_\R \frac{|\ff(\xi) - e^{-im\xi} |^2}{\xi^2} \, d\xi.
\ee
At difference with Gini index, which requires $F \in P_s^+(\R)$, the inequality measure $H(F)$ is well-defined for any measure $F \in \tilde P_s(\R)$.

It is interesting to remark that, similarly to Gini index, the inequality measure $H(F)$, for $F \in P_s^+(\R)$, is bounded above by $1$.  This property is shown in the following

\begin{lemma}\label{bound-H} Let $F \in P_s^+(\R)$ be a probability measure  of mean value $m$. Then
\be\label{b-H}
0 \le H(F) < 1.
\ee
\end{lemma}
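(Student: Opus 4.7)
The plan is to recast the Fourier-side definition \fer{H-F} as a physical-space $L^2$ distance by applying the Parseval-type identity \fer{Par2} with $G = F_m$, whose Fourier transform is $e^{-im\xi}$. This yields
\[
H(F) = \frac{1}{m}\int_\R |F(x) - F_m(x)|^2 \, dx,
\]
from which $H(F) \ge 0$ is immediate. Since $F \in P_s^+(\R)$ vanishes on $(-\infty, 0]$ and $F_m$ is the Heaviside step at $m$, the integral splits as
\[
H(F) = \frac{1}{m}\left[\int_0^m F(x)^2 \, dx + \int_m^\infty (1 - F(x))^2 \, dx\right].
\]

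For the upper bound I would exploit monotonicity of $F$: on $[0, m]$ one has $F(x)^2 \le F(m)\, F(x)$, and on $[m, \infty)$ one has $(1 - F(x))^2 \le (1 - F(m))(1 - F(x))$. Writing $u := \int_0^m F(x) \, dx$, $v := \int_m^\infty (1 - F(x)) \, dx$, and $w := \int_0^m (1 - F(x)) \, dx$, these estimates give
\[
H(F) \le \frac{F(m)\, u + (1 - F(m))\, v}{m}.
\]
The pivotal observation is that $u = v$: integrating $F + (1 - F) = 1$ on $[0, m]$ gives $u + w = m$, while the layer-cake representation $\int_0^\infty (1 - F(x)) \, dx = m(F) = m$ (valid because $F \in P_s^+(\R)$) gives $v + w = m$. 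Setting $\ell := u = v$, the convex combination then collapses to $H(F) \le \ell/m$, independently of $F(m)$.

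Finally, $\ell = m - w < m$ strictly, since $w > 0$: if $w$ vanished, then $F(x) = 1$ almost everywhere on $(0, m)$, and monotonicity together with right-continuity would force $F(0) = 1$, contradicting $F(0) = 0$. Hence $H(F) < 1$. The main obstacle lies in the sharp constant: the naive bounds $F^2 \le F$ and $(1 - F)^2 \le 1 - F$ only yield $H(F) \le (u + v)/m = 2\ell/m$, missing the target by a factor of two. Recovering the constant $1$ forces one to use the sharper monotonicity bounds involving $F(m)$, whose combined contribution collapses only thanks to the non-obvious identity $u = v$ that eliminates $F(m)$ from the final estimate.
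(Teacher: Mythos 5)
Your proof is correct, and it takes a genuinely different route from the paper's. Both arguments start the same way, using Parseval's identity to write $H(F)=\frac 1m\int_{\R_+}|F(x)-F_m(x)|^2\,dx$, but from there the paper works on discrete laws: it computes $H$ exactly for two-valued random variables (obtaining $pa/m<1$), shows that any three-valued law of mean $m$ is dominated in $H$ by a two-valued law with the same mean, and then iterates this reduction to all discrete laws, appealing at the end to approximation ``finally for any $F\in P_s^+(\R)$''. You instead estimate the integral directly: splitting at $x=m$, using the monotonicity bounds $F(x)^2\le F(m)F(x)$ on $[0,m]$ and $(1-F(x))^2\le(1-F(m))(1-F(x))$ on $[m,\infty)$, and observing that $\int_0^m F(x)\,dx=\int_m^\infty(1-F(x))\,dx$ (both equal $m-w$ with $w=\int_0^m(1-F(x))\,dx$, by the tail formula $\int_0^\infty(1-F(x))\,dx=m$, which the paper itself uses in its Pietra section), so that the convex combination collapses to $H(F)\le 1-w/m$; finally $w>0$, since $w=0$ would force $F=1$ a.e.\ on $(0,m)$ and hence, by monotonicity and right-continuity, $F\equiv 1$ on $(0,\infty)$, contradicting $m>0$. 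Each approach buys something. Yours is shorter and fully rigorous for arbitrary $F\in P_s^+(\R)$ in one stroke: the paper's final passage from discrete to general measures is only sketched, and since strict inequalities need not survive weak limits, as written it yields only $H(F)\le 1$ for non-discrete $F$ absent an extra uniformity argument --- exactly the gap your quantitative bound $H(F)\le 1-w/m$ closes. Conversely, the paper's computation produces explicit values and near-extremal examples (the two-valued law with $H(F)=1-\epsilon$), which establish sharpness of the upper bound and are reused later in the paper when studying the index $T$; your estimate is consistent with this, since it is attained with equality on two-valued laws, where $\int_m^\infty(1-F(x))\,dx=pa$.
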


\begin{proof}
Let $F \in P_s^+(\R)$ be a probability measure  of mean value $m$. Thanks to Parseval formula, the value of th expression \fer{H-F} in the Fourier space coincides with the value
\be\label{Par-3}
H(F) =  \frac 1{ m} \int_{\R_+} |F(x) -F_m(x) |^2\, dx.
\ee
The simplest case in which we can explicitly evaluate $H(F)$ is when $F(x)\in P_s^+(\R)$  is the measure function of a random variable $X$ taking only two non-negative values, that is, for $0<p<1$
\[
P(X= m-a) = p; \quad P(X= m+b) = 1- p, \quad a,b >0; \quad a\le m.
\]
Since $X$ has mean value $m$, $a$ and $b$ are related to $p$ by the relation
\be\label{aa}
pa = (1-p)b.
\ee
In this case, it is a simple exercise to verify that
\[
\frac 1{ m} \int_{\R_+} |F(x) -F_m(x) |^2\, dx = \frac 1m\left[p^2 a + (1-p)^2 b\right],
\]
so that, thanks to \fer{aa}
\[
H(F) = \frac{pa}m.
\]
Therefore, since $a \le m$ and $p < 1$, we conclude with $H(F) < 1$. 

An interesting application of the previous expression is obtained  by assuming $a =m$, and $p =1-\e$, with $0<\e\ll 1$. 
In this case the random variable $X$ of mean value $m$ is such that
\[
P(X= 0) = 1-\e; \quad P\left(X= \frac m\e\right) = \e.
\]
In economics, this situation describes a population in which most of agents have zero wealth, while one small part possesses an extremely  high wealth, while maintaining the mean wealth fixed. In this case $H(F) = 1-\e$.

Let us now consider a random variable $X$ of mean value $m$ that takes three non negative values $x_1 <x_2< m< x_3$,
where
\[
P(X=x_k) = p_k, \quad k = 1,2,3; \quad p_1+p_2+p_3 =1.
\]
In this case
\[
\int_{\R_+} |F(x) -F_m(x) |^2\, dx = p_1^2(x_2-x_1) + (p_1+p_2)^2(m-x_2) + p_3^2 (x_3 -m).
\]
Let 
\be\label{medio}
x = \frac{p_1}{p_1 +p_2}\, x_1 + \frac{p_2}{p_1 +p_2}\, x_2.
\ee
Then, 
 $x \in (x_1, x_2)$, and 
\[
(p_1+p_2) x = p_1x_1+p_2x_2.
\]
Let $Y$ be the two valued random variable defined by
\[
P(Y = x) = p_1 +p_2; \quad P(Y = x_3) = p_3.
\]
Then, thanks to \fer{medio}, $Y$ has mean value $m$. Moreover, if we denote by $G$ the measure function of $Y$,
\[
\int_{\R_+} |G(x) -F_m(x) |^2\, dx = (p_1+p_2)^2(m-x) + p_3^2 (x_3 -m).
\]
Owing to \fer{medio} we obtain
\begin{equations}\nonumber
& (p_1+p_2)^2 (x_2-x) = (p_1+p_2)^2\left(x_2 -  \frac{p_1}{p_1 +p_2}x_1 + \frac{p_2}{p_1 +p_2}x_2\right) = \\
& p_1 (p_1+p_2) (x_2-x_1) \ge p_1^2(x_2- x_1).
\end{equations}
Consequently
\begin{equations}\nonumber
& p_1^2 (x_2-x_1) + (p_1+p_2)^2\left(m- x_2 \right) \le  (p_1+p_2)^2 (x_2-x) +  (p_1+p_2)^2\left(m- x_2 \right) =\\
& (p_1+p_2)^2 (m-x),
\end{equations}
that implies
\be\label{oo}
H(F) \le H(G) < 1.
\ee
The same conclusion holds if we consider a random variable $X$ of mean value $m$ that takes the three non negative values $x_1 <m < x_2 < x_3$, and we choose the value $x \in (x_2, x_3)$ like in \fer{medio}. The previous computations show that, by suitably choosing the point, we can built, starting from a random variable with three values, a random variable with two values, with the same mean and with a bigger value of the functional $H$, which by the previous computations is less than $1$. At this point, we can iterate the procedure and conclude that the upper bound in \fer{b-H} holds for the measure function $F \in P_s^+(\R)$ of any discrete random variable $X$, and finally for any $F \in P_s^+(\R)$. 
\end{proof}

The interest in having an inequality index that quantifies the statistical heterogeneity of probability measures defined on the whole real line $\R$ in terms of the Fourier transform is evident. As an example, let us compute the value of the functional $H$ 
for a Gaussian probability measure $F$ of mean $m>0$ and variance $\sigma^2$. Since the Fourier transform of the Gaussian density is given by
\be\label{fou-G}
\ff(\xi) = \exp\left\{ -i\, m\, \xi  -\frac{\sigma^2}2\xi^2\right\},
\ee
we easily obtain
\[
H(F) =  \frac 1{2\pi m} \int_\R \frac{\left( 1 -\exp\left\{ -\frac{\sigma^2}2\xi^2\right\} \right)^2}{\xi^2} \, d\xi.
\]
Integration by parts yields
\begin{equations}\nonumber
& \int_0^\infty \frac{\left( 1 -\exp\left\{ -\frac{\sigma^2}2\xi^2\right\} \right)^2}{\xi^2} \, d\xi = \left[-{\left( 1 -\exp\left\{ -\frac{\sigma^2}2\xi^2\right\} \right)^2}\frac 1{\xi}\right]_0^\infty +\\
& + \int_0^\infty \frac 1\xi  \left( 1 -\exp\left\{ -\frac{\sigma^2}2 \xi^2\right\} \right) \sigma^2\xi\exp\left\{ -\frac{\sigma^2}2 \xi^2\right\}  \, d\xi = \\
& \sigma \int_0^\infty  \left( e^{-x^2/2 } - e^{-x^2} \right)\, dx = \sigma \frac{ (\sqrt 2 -1)\sqrt\pi}2.
\end{equations}
Thus, for a Gaussian probability measure $F$ of mean $m>0$ and variance $\sigma^2$ we have the value
\be\label{ggg}
H(F) = \frac{ (\sqrt 2 -1)}{2\sqrt{2\pi}} \frac{\sigma}m,
\ee
namely a value proportional to the coefficient of variation $\sigma/m$, with an explicit constant strictly less than one.

\subsection{A Fourier-based expression of Pietra index} 
For a probability measure $F \in P_s^+(\R)$ with mean $m$, the Pietra index $P(F)$ \cite{Eli2,Pie} is defined by the formula
\be\label{Pie}
P(F) = \int_m^{+\infty}(1- F(x))\, dx.
\ee
As remarked in \cite{Eli2}, the definition \fer{Pie} seems to disregard the part of the measure below the mean. This, however, is not true, and \fer{Pie} makes use of the full information encapsulated in the probability law of the random variable $X$ of measure $F$.

There is a simple way to verify the previous assertion. Indeed, since
\[
m = \int_{\R_+}(1-F(x))\, dx,
\]
it holds
\begin{equations}\nonumber
&H(F) = \frac 1m\int_{\R_+} [F(x) -F_m(x) ]^2\, dx = \frac 1m\int_{\R_+} [1- F(x) -(1-F_m(x)) ]^2\, dx =\\
&  \frac 1m\int_{\R_+} [1- F(x)]^2\, dx + \frac 1m\int_{\R_+} [1- F_m(x)]^2\, dx -\frac 2m\int_{\R_+} [1- F(x)][1-F_m(x)]\, dx = \\ 
&\frac 1m\int_{\R_+} [1- F(x)]^2\, dx + 1 - \frac 2m\int_0^m[1- F(x)]\, dx =\\
&  \frac 1m\int_{\R_+} [1- F(x)]^2\, dx - 1 + \frac 2m\int_m^{+\infty}[1- F(x)]\, dx  = -G(F) + 2P(F).
\end{equations}
Hence, we have the identity
\be\label{Pie-new}
P(F) = \frac 12\left[G(F) + H(F) \right].
\ee
In other words, the Pietra index of a probability measure $F \in \tilde P_s^+$ is represented by the mean value of the two indices $G(F)$ and $H(F)$, where $H$ is defined in \ref{H-F},  identically weighted.

Resorting to the Fourier expressions of Gini and $H$ indices we then obtain for Pietra index the expression
\be\label{Pie-F}
P(F) = \frac 12\left[1 - \frac 1{2\pi m} \int_\R \frac{|1 -\ff(\xi)|^2}{\xi^2} \, d\xi + \frac 1{2\pi m} \int_\R \frac{|\ff(\xi)- e^{-im\xi} |^2}{\xi^2} \, d\xi  \right].
\ee

\begin{remark}  
The Fourier expression \fer{Pie-F} clarifies that  the Pietra index is obtained by taking into account at the same time the distances in $\dot H_{-1}$ of a probability measure in $P_s^+(\R)$ from the Dirac delta functions located in  zero, and, respectively in mean value $m$. From this point of view, the Pietra index appears as a well-balanced inequality index. This feature is hidden in the classical definition.
\end{remark}

\begin{remark} 
Since
\[
H(F) = 2P(F) - G(F),
\] 
the values of the inequality index $H(F)$ for  a large number of probability measures can be easily computed resorting to the tables  of values assumed by Gini and Pietra indices.  
\end{remark}

\begin{remark}
If one considers only one-dimensional discrete measures, the inequality index $H(F)$ defined by \fer{H-F} coincides with a particular case of the discrepancy function recently introduced in \cite{Au2}, where the discrepancy measures the distance in $L^2$ distance between the characteristic functions of two given discrete measures weighted by the function $k^2$, with $k = 1,2,\dots, N$. In this case, one of the two discrete measure is a Dirac delta function located in the mean value.
\end{remark}

\subsection{Towards new inequality indices}

A part from the scaling constant, the  functional $H(F)$, coincides with the square of the $L^2(\R)$--norm of the function 
\[
h(\xi) =  \frac{|\ff(\xi)- e^{-im\xi} |}{|\xi|}.
\]
It is immediate to verify that a further scaled invariant functional can be obtained by considering the $L^\infty(\R)$--norm of $h(\xi)$. This functional is given by
\be\label{sup-F}
H_\infty(F) = \frac 1{2m} \sup_{\xi\in \R}\frac{|\ff(\xi)- e^{-im\xi} |}{|\xi|}.
\ee
Resorting to the triangular inequality, we can easily conclude that, if $F \in \tilde P_s^+$, $H_\infty$ satisfies the standard bounds
\be\label{bbb}
0 \le H_\infty(F)< 1.
\ee
Indeed,  for $F \in P_s^+$ with mean value $m$
\[
H_\infty(F) = \frac 1m \sup_{\xi\in \R}\frac{|\ff(\xi)- e^{-im\xi} |}{|\xi|} \le  \frac 1m \sup_{\xi\in \R}\frac{|1- e^{-im\xi} |}{|\xi|} + \frac 1m \sup_{\xi\in \R}\frac{|1 -\ff(\xi)|}{|\xi|}.
\]
Now
\[
\frac 1m \sup_{\xi\in \R}\frac{|1- e^{-im\xi} |}{|\xi|} = \frac 1m \sup_{\xi\in \R}\frac{\sqrt{2(1-\cos \xi m)}}{|\xi|} = \lim_{\xi \to 0} \frac{\sqrt{2(1-\cos \xi m)}}{m|\xi|} = 1.
\]
Moreover, since by \fer{trans} 
\[
\frac{1- \ff(\xi)}{i\xi} = \int_\R (F_0(x) -F(x)) e^{-i\xi x} \, dx,
\] 
we obtain
\begin{equations}\nonumber
&\frac 1m \frac{|1 -\ff(\xi)|}{|\xi|} = \frac 1m \left| \int_\R (F_0(x) -F(x)) e^{-i\xi x} \, dx \right| \le \\ &\frac 1m \int_\R \left||(F_0(x) -F(x)) e^{-i\xi x}  \right| \, dx =
 \frac 1m  \int_{\R_+}(1 -F(x)) \, dx  = 1.
 \end{equations}

The functional $H_\infty$ is a particular case of  a metric for probability measures which have been used to study convergence to equilibrium for the Boltzmann equation. This is an argument that in kinetic theory of rarefied gases goes back to  \cite{GTW}, where convergence to equilibrium for the Boltzmann equation for Maxwell pseudo-molecules was studied in terms of a metric for Fourier transforms (cf. also \cite{Carrillo:2007, MaTo07, TV} for further applications). 

The metric  introduced in \cite{GTW} in connection with the Boltzmann equation for Maxwell molecules, was subsequently applied in various contexts, which include kinetic models for wealth measures \cite{PT13}, thus establishing a number of common points between kinetic modeling and inequality measures.  

For a given pair of random variables $X$ and $Y$ distributed according to $F$ and $G$ these metrics read
\be\label{me-inf}
 d_r(X,Y) = d_r(F,G) = \sup_{\xi\in\R} \frac{|\ff(\xi) -\gg(\xi)|}{|\xi|^r}, \quad r >0.
 \ee
As shown in \cite{GTW}, the metric $d_r(F,G)$ is finite any time the probability measures $F$ and $G$ have equal moments up to $[r]$, namely the entire part of $r\in \R_+$, or equal moments up to $r-1$ if $r \in \N$, and it is equivalent to the weak$\null^*$ convergence of measures for all $r >0$. Among other properties, it is easy to see \cite{GTW, PT13} that, for two pairs of  random variables  $X,Y$, where $X$ is independent from $Y$, and  $Z, \tilde Z$ ($Z$ independent from $\tilde Z$),  and any  constant $c$ 
 \begin{equations}\label{sca1}
 & d_r(X+Y, Z+\tilde Z)\le d_r(X,Z) + d_r(Y,\tilde Z) \\
 & d_r(c X, c Y) =  |c|^r d_r(X,Y).
 \end{equations}
These properties classify $d_s$ as an ideal probability metric in the sense of Zolotarev \cite{Zolo}. 
Properties of $H_\infty(F)$ can be easily extracted from \fer{sca1} considering that, if $X$ is a random variable with probability measure $F$ of mean value $m$
\[
H_\infty (F) = H_\infty(X) = \frac 1{2m} d_1(F, F_m)
\]
In particular, the second property in \fer{sca1} implies the scaling invariance of $H_\infty$.

Also the first inequality in \fer{sca1} implies that, for any pair of independent variables $X$ and $Y$, with means $m_X$ (respectively $m_Y$ ), by choosing $Z$ and $\tilde Z$ with probability measures $F_{m_X}$ (respectively $F_{m_Y}$)
\be\label{super}
H_\infty(X+Y) \le \frac{m_X}{m_X+m_Y} H_\infty(X) + \frac{m_Y}{m_X+m_Y} H_\infty(Y),
\ee
namely a property of sub-additivity for convolutions. Moreover, if $Y$ is distributed with probability measure $F_{m_Y}$, Inequality \fer{super} gives
\be\label{case}
H_\infty (X+Y) \le \frac{m_X}{m_X+m_Y} H_\infty(X) < H_\infty (X).
\ee
Inequality \fer{case} is a typical feature of sparsity measures, which translates to the case of a continuous variable the property that adding a constant to each coefficient decreases sparsity \cite{HR}.

In view of its properties, the functional $H_\infty(\cdot)$ appears to be a good measure of inequality. Unlikely, the computation of the values of $H$ for most probability measures is cumbersome. In particular, it seems not possible to explicitly compute the value of $H_\infty(X)$ in the simplest case in which the variable $X$ takes only two positive values. Consequently, we can not evaluate if, for a given $\e \ll1$, there exists a probability measure with index $1-\e$. Indeed, as we saw in Section \ref{sec:H-dist}, this basic property follows from the analysis of the values taken by the inequality index at two-valued random variables.

  The  upper bound $1$ can be found also by resorting to Lagrange theorem. Indeed,  since $F \in P_s(\R)$  the function  $h(\xi) = |\ff(\xi)- e^{-i m \xi }|$ is  continuously differentiable on the entire real line, and satisfies $h(\xi=0) = 0$. 

  Therefore, by Lagrange theorem, for any given $\xi \in \R$, there exists $\xi_0\in\R$ such that
  \[
  \frac{h(\xi)}\xi = \frac{h(\xi) -h(0)}{\xi -0} = h'(\xi_0),
 \]
which implies
\[
\sup_{\xi \in \R} \left|\frac{h(\xi)}\xi\right| \le \sup_{\xi \in \R} |h'(\xi)|. 
\]
 Since $|e^{i m \xi }|= 1$, we have the identity
 \[
 h(\xi) = |\ff(\xi)- e^{-i m \xi }| = |\ff(\xi)e^{i m \xi } -1|,
 \]
 and 
 \[
 | h'(\xi)| \le \left|\ff'(\xi) e^{i m \xi } +im\ff(\xi) e^{i m \xi }  \right| = \left|\ff'(\xi) +im\ff(\xi)   \right|.
 \]
We therefore obtain
\be\label{new-M}
\frac 1m \sup_{\xi \in \R} \frac{|\ff(\xi)- e^{-i m \xi }|}{|\xi|} \le \frac 1m \sup_{\xi \in \R}\left|\ff'(\xi) +im\ff(\xi)   \right| = \sup_{\xi \in \R}\left|\ff(\xi) -\frac{\ff'(\xi)}{\ff'(0)}  \right|.
\ee
Using the argument leading to the upper bound in  \fer{bounds}  one easily concludes that  $H_\infty(F)$ is bounded above by $T(F)$, where $T(F)$ is the functional defined by \fer{ine-T}, and that $H_\infty(F)\le 1$.

\section{A new Fourier-based index of inequality}\label{sec:new}

This Section  will  be devoted to study in more details the main properties of the inequality index $T(F)$, as given by \fer{ine-T}. Depending on convenience, given a random variable $X$ with probability measure $F \in \tilde P_s(\R)$,  we will write indifferently $T(X)$ or $T(F)$.

As outlined in the introduction, for any constant $c>0$, $T(F)$ is  invariant with respect to the scaling $F(x) \to F(cx)$.
The scaling invariance of $T(F)$ can be easily seen by noticing that, if $\ff(\xi)$ is the Fourier transform of $F(x)$, $\ff(\xi/c)$ is the Fourier transform of $F(cx)$, and
\[
\frac{\ff'_\xi(\xi/c)}{\ff'_\xi(0)} = \left. \frac{\ff'_\eta(\eta)}{\ff'_\eta(0)} \right|_{\eta = \xi/c},
\]
that implies
\be\label{sca}
\sup_{\xi \in \R} \left| \ff(\xi/c) - \frac{\ff'_\xi(\xi/c)}{\ff'_\xi(0)} \right| =
\sup_{\xi \in \R} \left| \ff(\eta) - \frac{\ff'_\eta(\eta)}{\ff'_\eta(0)} \right|_{\eta = \xi/c} = T(F).
 \ee
Moreover, as shown in the introduction, if $F \in \tilde P_s(\R)$, $T(F)$ the values of the functional lie  between zero and one, where the value zero (minimal inequality) is assumed in correspondence to a Heaviside probability measure $F_m$, with $m >0$.
In addition, the value $T(F) = 1$, corresponding to maximal inequality is approached if we compute the value of $T(X)$ when, as discussed in Section \ref{sec:H-dist},  the random variable $X$   of mean value $m$ is a two-valued random variable where
\[
P(X= 0) = 1-\e; \quad P\left(X= \frac m\e\right) = \e ; \quad \e \ll 1.
\]
In this case 
\[
\ff(\xi) = 1-\e + \e \exp\left\{-i\frac m\e \xi\right\},
\]
while
\[
\ff'(\xi) = -im \exp\left\{-i\frac m\e \xi\right\}.
\]
Therefore
\begin{equations}\nonumber
&2\,T(X) =  \sup_{\xi \in \R} \left| 1-\e + \e \exp\left\{-i\frac m\e \xi\right\} - \exp\left\{-i\frac m\e \xi\right\}\right| = \\
& (1-\e) \sup_{\xi \in \R} \left| 1 - \exp\left\{-i\frac m\e \xi\right\}\right| = (1-\e)\sup_{\xi \in \R}\sqrt{2(1 - \cos\frac m\e\xi)} =2(1-\e),
\end{equations}
and $T(X) = 1-\e$.

Let $F,G\in \tilde P_s(\R)$ two probability measures with the same mean value, say $m$. Then, for any given $\t \in (0,1)$ it holds $ \t \ff'(0) +(1-\t)\gg'(0) = \ff'(0) = \gg'(0)$, so that
\begin{equations}\nonumber
&T(\t F + (1-\t)G) =\frac 12 \sup_{\xi \in \R} \left| \t \ff(\xi) +(1-\t)\gg(\xi) - \frac{\t \ff'(\xi) +(1-\t)\gg'(\xi)}{\t \ff'(0) +(1-\t)\gg'(0)} \right|  = \\
&\frac 12 \sup_{\xi \in \R} \left| \t \ff(\xi)  - \t \frac{\ff'(\xi) }{ \ff'(0)} + (1-\t) \gg(\xi)  - (1-\t)\frac{ \gg'(\xi) }{ \gg'(0)}\right| \le \t\, T(F) + (1-\t)T(G).
 \end{equations}
 This shows the convexity of the functional $T$ on the set of probability measures with the same mean.
 
However,  the most important property characterizing the inequality index $T$ is linked to its behavior in presence of convolutions. For any given pair of Fourier transforms of probability measures in $\tilde P_s(\R)$, let us set
\[
\hh(\xi) = \ff(\xi)\gg(\xi).
\]
Then, since $|\ff(\xi)| \le \ff(0) =1$ and $|\gg(\xi)| \le \gg(0) =1$
\begin{equations}\nonumber
&\sup_{\xi \in \R} \left| \hh(\xi) - \frac{\hh'(\xi)}{\hh'(0)} \right| = \sup_{\xi \in \R} \left| \ff(\xi)\gg(\xi) - \frac{\ff'(\xi)\gg(\xi) +\ff(\xi)\gg'(\xi)}{\ff'(0)+\gg'(0)} \right| =\\
&\sup_{\xi \in \R} \left| \ff(\xi)\gg(\xi) - \frac{\ff'(0)}{\ff'(0)+\gg'(0)}\frac{\ff'(\xi)\gg(\xi)}{\ff'(0)}- \frac{\gg'(0)}{\ff'(0)+\gg'(0)}\frac{\ff(\xi)\gg'(\xi)}{\gg'(0)} \right| \le\\
&\frac{\ff'(0)}{\ff'(0)+\gg'(0)} \sup_{\xi \in \R} \left| \ff(\xi)\gg(\xi) - \frac{\ff'(\xi)\gg(\xi)}{\ff'(0)}\right| + \\
&\frac{\gg'(0)}{\ff'(0)+\gg'(0)} \sup_{\xi \in \R} \left| \ff(\xi)\gg(\xi) - \frac{\ff(\xi)\gg'(\xi)}{\gg'(0)}\right| \le\\
& \frac{\ff'(0)}{\ff'(0)+\gg'(0)} \sup_{\xi \in \R} \left| \ff(\xi) - \frac{\ff'(\xi)}{\ff'(0)}\right| + \frac{\ff'(0)}{\ff'(0)+\gg'(0)} \sup_{\xi \in \R} \left| \gg(\xi) - \frac{\gg'(\xi)}{\gg'(0)}\right|. 
\end{equations}
Therefore, if $X$ and $Y$ are independent random variables with probability measures in $\tilde P_s(\R)$, and mean values $m_X$ (respectively $m_Y$) the inequality index $T$ satisfies the inequality
\be\label{new-con}
T(X+Y) \le \frac{m_X}{m_X+ m_Y} T(X) + \frac{m_Y}{m_X+ m_Y} T(Y).
\ee
In particular, if $Y$ is a random variable that takes the value $m >0$ with probability $1$ (so that $\gg(\xi) = e^{-im\xi}$ and $T(Y) = 0$), 
\be\label{tide}
T(X+Y) \le \frac{m_X}{m_X+ m} T(X) < T(X).
\ee
Since $X+Y$ corresponds to adding the constant $m$ to $X$, this property asserts that adding a constant wealth to each agent decreases inequality. 

Also, if the random variables $X_1$ and $X_2$ are distributed with the same law of $X$, thanks to the scale property 
\be\label{clt}
T\left(\frac{X_1+X_2}2\right) = T\left(X_1+X_2\right) \le T(X),
\ee
while the mean of $(X_1+X_2)/2$ is equal to the mean of $X$.

\begin{remark} 
Inequality \fer{clt} is fully operational in the case where the two variables $X_1$ and $X_2$ are characterized either by a continuous probability measure or  take on an infinite number of values. 
Only in this case, in fact, do the probability measure  remain of the same type under the operation of convolution.  

Suppose in fact that the variables $X_i$, $i =1,2$, are Bernoulli variables, such that 
\[
P(X_i =0) = P(X_i=1) = \frac 12, \quad i= 1,2,
\]
The probability measure of $X_i$, $i =1,2$, has Fourier transform
\[
\ff(\xi) = \frac 12\left(1 + e^{-i \,\xi}\right),
\]
and the probability measure of the convolution corresponds to the Fourier transform
\[
\ff(\xi)^2 = \frac 14\left( 1+ 2 e^{-i\,\xi} + e^{-2i\,\xi} \right).
\]
Hence, the random variable $Y =X_1+X_2$ takes the three values $0,1,2$ with probabilities
\[
P(Y =0) = P(Y=2) = \frac 14, \quad P(Y =1) = \frac 12.
\]
Clearly, it makes little sense to relate the heterogeneity of a two-valued random variable to a three-valued random variable.
\end{remark} 

Another important consequence of inequality \fer{new-con} is related to the situation in which the random variable $Y$ represents a noise (of  mean value $m>0$) that is present when measuring the inequality index of $X$.  The classical choice is that the additive noise is represented by a Gaussian variable of mean $m$ and variance $\sigma^2$. 

If this is the case, the Fourier transform of the Gaussian density is given by \fer{fou-G},
which is such that
\[
\ff'(\xi) = (-im -\sigma^2\xi) \ff(\xi);  \quad \ff'(0) = -im. 
\]
Hence, since $|\ff(\xi)| \le \ff(0) = 1$,
\[
\left| \ff(\xi) - \frac{\ff'(\xi)}{\ff'(0)}  \right| = \left| \ff(\xi) - \frac{-im -\sigma^2\xi}{-im} \ff(\xi) \right| = \frac\sigma{m} \left| \sigma\xi \ff(\xi)\right|.
\]
Finally, if $Y$ denotes the Gaussian random variable of mean $m >0$ and variance $\sigma^2$ we obtain
\[
T(Y) = \frac\sigma{2m} \sup_{\xi\in \R} \left| \xi e^{-\xi^2/2}\right| = \frac\sigma{2m}\frac 1{\sqrt e}.
\]
As we showed in Section \ref{sec:H-dist} for the index $H$ defined by \fer{H-F}, for a Gaussian variable, the inequality index $T(Y)$ is proportional to the coefficient of variation of $Y$. 
We have in this case
\be\label{new-gau}
T(X+Y) \le \frac{m_X}{m_X+ m} T(X) + \frac{\sigma}{m_X+ m}\frac 1{\sqrt e},
\ee
namely an explicit upper bound for the inequality index in terms of the mean value and the variance of the Gaussian noise.

\section{Examples} \label{sec:examples}

In this section we will recover the values of the inequality index $T$ for some well-known probability measures.  With few exceptions, any time the explicit expression of the Fourier transform of the probability measure is available, the computation of the value of the inequality index $T(\cdot)$ is straightforward.  The list of probability measures that can be treated via Fourier transform is consistent, and includes both discrete and continuous distributions. For an in-depth look at this topic, the interested reader can consult the book \cite{Ober}. 

We do not consider in this paper the possibility to make use of the fast Fourier transform to compute the values of the functional $T$ in the case of a random variable taking only a finite number of values, a situation that we intend to treat in a companion paper. 

\subsection{Two-valued random variables} Let $X$ be a Bernoulli random variable, characterized by the probability measure with Fourier transform
\[
\ff(\xi) = 1-p + p e^{-i\xi}, \quad 0<p<1.
\]
Then, since $\ff'(\xi) = -ip  e^{-i\xi}$, and it is immediate to conclude that
\be\label{Ber}
T(X) = \frac 12 (1-p) \sup_{\xi \in \R} \left|1- e^{-i\xi}\right| = 1-p.
\ee
For given positive constants $a,b$,  let $Y = aX+b$: Then $Y$ is characterized by the Fourier transform
\[
\hh(\xi) = \ff(a\xi) e^{-ib\,\xi}. 
\]
We have
\begin{equations}\nonumber
&\left| \hh(\xi) - \frac{\hh'(\xi)}{\hh'(0)} \right| = \left|\frac{ap \ff(a\xi) - \ff'(a\xi)}{ap+b}  \right| = \frac{ap}{ap+b}(1-p) \left|1- e^{-i\xi}\right|, \\
\end{equations}
so that
\[
T(Y) = \frac{ap(1-p)}{ap+b}.
\]
Choosing $\alpha = b$ and $\beta = a+b$, where $\beta >\alpha$, we then conclude that a two valued random variable $Y$ such that
\[
P(Y = \alpha)  = 1-p, \quad P(Y = \beta ) = p  
\]
has an inequality index
\be\label{two-v}
T(Y) = \frac{(\beta -\alpha) p(1-p)}{\alpha(1-p) + \beta p}.
\ee
The same value is assumed by the Gini and Pietra indices of $Y$.

\subsection{Poisson distribution}
Poisson distribution is characterized by the Fourier transform
\[
\ff(\xi) = \exp\left\{ \lambda\left( e^{-i\xi} -1 \right)\right\} .
\]
In this case
\begin{equations}\nonumber
& \left| \ff(\xi) - \frac{\ff'(\xi)}{\ff'(0)}  \right| = \left|\left( e^{-i\xi} -1 \right) \ff(\xi) \right| = \sqrt{2(1-\cos\xi)}\exp\left\{ -\lambda(1-\cos\xi)\right\}.
 \end{equations}
Let us set $0\le 1-\cos\xi = x^2 \le 2$. Then 
\[
T(F) = \frac {\sqrt 2}2 \sup_{0\le x\le\sqrt 2} x\, e^{-\lambda x^2}.
\]
If $\lambda \le 1/4$, the maximum is taken in $\bar x = \sqrt 2$, and $T(F) = e^{-2\lambda}$. 
If $\lambda > 1/4$, the maximum is taken at the point $\bar x = 1/\sqrt{2\lambda}$, and in this case
\[
T(F) =  \frac 1{2\sqrt\lambda} e^{-1/2}.
\]
Hence, if $F$ is a Poisson probability measure of mean $\lambda$ we have 
 \be\label{coe-P}
T(F) = \left\{
  \begin{array}{cc}
 & e^{-2\lambda} \quad \rm{if}\,\,\,  \lambda \le \frac 14 \\
   & \frac 1{2\sqrt\lambda} e^{-1/2} \quad \rm{if}\,\,\,  \lambda > \frac 14.
  \end{array}
 \right.
 \ee
Note that, as a function of $\lambda$, the functional $T(F)$ is differentiable at the point $\lambda = 1/4$, and it decreases as $\lambda$ increases. Hence, small values of $\lambda$ corresponds to large heterogeneity. 

\begin{remark} It is interesting to remark that the value of the Gini index of a Poisson distribution, say $F$, can not be computed explicitly by resorting to its expression in Fourier transform, as given by formula \fer{Gin-F}. The same conclusion holds if we try to compute the values of $H(F)$, as given by \fer{H-F}, and $H_\infty(F)$, defined  in \fer{sup-F}. 
\end{remark} 

\begin{remark} The previous computations can be extended,  at the cost of more complicated calculations, to evaluate the explicit values of the index $T$ to distributions which are obtained by summing up independent Poisson variables. Maybe the most interesting case corresponds  to the Skellam distribution \cite{Ske,Ske2}, that is the discrete probability distribution of the difference  of two  independent random variables $X_1$ and $X_2$, each Poisson-distributed with  expected values $\lambda_1$
and, respectively $\lambda_2$, with $\lambda_1\not= \lambda_2$. 
\end{remark}

\subsection{Stable laws}
As further example of probability measures defined on the whole real line $\R$, we will compute the value of $T$ in correspondence to a stable law \cite{Zol}. We will restrict here to the case of symmetric alpha-stable distributions of scale parameter $\sigma >0$ and shift parameter $m>0$, characterized by the Fourier transform
\[
\ff_\alpha(\xi) = \exp\left\{ -i\xi\, m - \left| \sigma \xi \right|^\alpha \right\}, \qquad \alpha >1.
\]
Note that the Gaussian distribution of mean $m$ and variance $2\sigma^2$  corresponds to the choice $\alpha = 2$.

For these distributions
\begin{equations}\nonumber
& \left| \ff_\alpha(\xi) - \frac{\ff_\alpha'(\xi)}{\ff_\alpha'(0)}  \right| = \frac \alpha{m}\left| \sigma^\alpha|\xi|^{\alpha -2}\xi  \ff_\alpha(\xi) \right| = \frac \alpha{m}\sigma \left|\sigma\xi\right|^{\alpha -1} \exp\left\{ - \left|\sigma\xi\right|^{\alpha}\right\}.
 \end{equations}
Consequently 
\[
T(F_\alpha) =  \frac{\sigma\alpha}{2\, m} \sup_{x \ge 0} x^{(\alpha -1)/\alpha} e^{-x} 
\]
Evaluating the value of the supremum, we obtain
\be\label{stable}
T(F_\alpha) =  \frac{\sigma\alpha}{2\, m}\left( \frac{\alpha -1}\alpha\right)^{(\alpha -1)/\alpha} \exp\left\{ - \frac{\alpha -1}\alpha\right\}
\ee
For 
$
\alpha =1 $
the distribution reduces to a Cauchy distribution with scale parameter $\sigma$ and shift parameter $m$. In this case
\[
T(F_1) =  \frac{\sigma}{2\, m} \sup_{x \ge 0}  e^{-x}  = \frac{\sigma}{2\, m}.
\]

\subsection{An interesting case: the uniform distribution}

The uniform distribution in the interval  $(-a,a)$, with $a >0$ is characterized by the Fourier transform
\be\label{uni}
\ff(\xi) = \frac{\sin(a\xi)}{a\xi}.
\ee
Hence, if $X$ is a random variable uniformly distributed on $(-a,a)$, for any constant $b>0$, $X+b$ is uniformly distributed on the interval $(-a +b, a+b)$, and the Fourier transform of the probability measure of $X+b$, of mean value $b$ is given by
\[
\gg(\xi) = \ff(\xi)e^{-ib\xi}.
\]
Then
\begin{equations} \nonumber
&\sup_{\xi \in \R} \left| \gg(\xi) - \frac{\gg'(\xi)}{\gg'(0)} \right| = \sup_{\xi \in \R} \left| \ff(\xi)e^{-ib\xi} - \frac{\ff'(\xi)e^{-ib\xi} -ib \ff(\xi)e^{-ib\xi}}{-ib} \right| = \frac 1b\sup_{\xi \in \R}|\ff'(\xi)|.\\
\end{equations}
Next, since $\ff$ is expressed by \fer{uni}
\[
\ff'(\xi) = \frac{a\xi \cos(a\xi) - \sin(a\xi)}{a\xi^2},
\]
which implies
\[
\sup_{\xi \in \R}|\ff'(\xi)| = a \sup_{\xi \in \R}\left|\frac{\xi \cos\xi - \sin\xi}{\xi^2} \right| = a \, \delta_u
\]
where $\delta_u$ is a  positive constant. Hence, if $X$ is uniformly distributed on the interval $(-a,a)$, and $b >0$
\[
T(X+b) = \frac a{2b} \,\delta_u.
\]
In particular, if $b>a$, by setting $\alpha= b-a$ and $\beta = b+a$, we conclude that, if $Y$ is a random variable uniformly distributed on the interval $(\alpha, \beta) \in \R_+$, it holds
\be\label{fin-u}
T(Y) = \frac{\delta_u}2\,\frac{\beta -\alpha}{\beta + \alpha} .
\ee
In this case, at difference with the Gini index, which takes the explicit value
\[
G(Y) =  \frac13\,\frac{\beta -\alpha}{\beta + \alpha} ,
\]
the value of the coefficient $\delta_u$ can be achieved only numerically.  It is however interesting to remark, in the case of a uniform distribution,  the values of the two indices have deep similarities.

A rough estimation of the constant $\delta_u$ follows by studying the function
\[
u(x) = \frac{\sin x -x\cos x}{x^2}, \quad x \ge 0.
\]
It is immediate to show that any extremal point $\bar x $ of the function $u(x)$ solves the equation
\[
(x^2-2)\sin x - 2x\cos x =0,
\] 
that implies 
\[
\sin\bar x -\bar x \cos\bar x = \frac{\bar x^2}2 \sin\bar x.
\]
Consequently, if $\bar x$ is an extremal point of $u(x)$, 
\[
|u(\bar x)| = \frac 12 |\sin \bar x| \le \frac12.
\]
 Hence  $\delta_u \le 1/2$.

To end this Section, we list in Table $4.1$  the values of the inequality index $T$ for some probability measures in $\R_+$ and $\R$ allowing explicit computations. It is remarkable that the Fourier-based index $T$ is well-adapted to compute the heterogeneity index of discrete probability measures, like the negative binomial distribution, or the geometric distribution, which are explicitly expressible in terms of the Fourier transform. We leave the details of the evaluation to the reader.
\begin{table}
    \begin{center}
   { \renewcommand\arraystretch{2.5} 
        \begin{tabular}{lll|ll}
    Measure & Density & Fourier transform & Index $T(\cdot)$\\ 
    \hline
    Exponential &$\lambda e^{-\lambda\, x}$ & $\left( 1 + i\xi/\lambda\right)^{-1}$& $\frac 1{4}$\\
    Gamma & $\frac 1{\Gamma(k)\theta^k} x^{k-1} e^{-x/\theta}$ &$(1+i\theta\xi)^{-k}$ & $\frac 1{2\sqrt{k}} \left(1+ \frac 1{k}\right)^{-(k+1)/2} $\,\,\, $k > 0$\\
    Chi-squared &$\frac 1{2^{k/2}\Gamma(k/2)}x^{k/2-1} e^{-x/2}$ & $(1+2i\xi)^{-k/2}$ & $\frac 1{\sqrt{2k}} \left(1+ 2/k\right)^{-(k+2)/4} $\,\,\, $k \ge 1$\\
    Laplace &$\frac 1{2\sigma}e^{-|x-m|/\sigma}$ & $e^{-im\,\xi}\left( 1 + \sigma^2\xi^2\right)^{-1}$& $\frac\sigma{m}\frac{16}{25}$\\
    \hline
        \end{tabular}
        }
        \label{table1}
\end{center}
\vskip 2mm
    \caption[]{Values of the index $T$ for some probability measures }
\end{table}
 \section{An application to kinetic theory of wealth distribution}\label{sec:kinetic}
 
 Kinetic modelling of agent-based markets are based on few universal \index{wealth distribution models}
assumptions \cite{PT13}. First, agents are indistinguishable, so that an agent's
{state} at any instant of time $t\geq0$ is completely characterized
by his current wealth $w\geq0$. Second,  the time variation of the wealth
distribution is entirely due to binary trades between agents. A trade
represents a binary interaction in which part of the money of each agent
is modified according to well-defined rules. When two agents undertake in
a trade, their {pre-trade wealths\/} $v$, $w$ change into the {
post-trade wealths\/} $v^*$, $w^*$ according to a linear exchange rule:
\begin{equation}
  \label{eq.trules}
  v^* = p_1 v + q_1 w, \quad w^* = q_2 v + p_2 w.
\end{equation}
The {interaction coefficients\/} $p_i$ and $q_i$, $i=1,2$, are, in
general, non negative random parameters.
 
 The first explicit description of a binary wealth-exchange model
dates back to the seminal work of Angle~\cite{Ang}, (cf. also \cite{Ang1}), even if the
intimate relation to statistical mechanics was only described about
a decade later~\cite{DY00,IKR}. In each binary interaction, winner
and loser are randomly chosen, and the loser pays a random fraction
of his wealth to the winner. From here, Chakraborti and
Chakrabarti~\cite{CC00} developed the class of {\em strictly
conservative\/} exchange models, which preserve the total wealth in
each individual trade,
\begin{equation}
  \label{eq.strict}
  v^* + w^* = v + w .
\end{equation}
In its most basic version, the microscopic interaction is determined by
one single parameter $\lambda\in(0,1)$, which is the global {\em saving \index{wealth distribution models!saving propensity}
propensity}. In the interactions, each agent retains the corresponding fraction
of its pre-trade wealth, while the rest $(1-\lambda)(v+w)$ is equally
shared equally between the two trading partners,
\begin{equation}
  \label{eq.xchg}
  v^* = \lambda v + \frac12(1-\lambda)(v+w), \quad   w^* = \lambda w + \frac12(1-\lambda)(v+w) .
\end{equation}

The wealth distribution $f(v,t)$ of the system of agents coincides with agent's density and satisfies the associated spatially homogeneous Boltzmann
equation,
\begin{equation}
  \label{eq.boltzmann}
 \partial_t f + f = Q_+(f,f),
\end{equation}
on the real half-line, $v\geq0$. The collisional gain operator $Q_+$ acts
on test functions $\varphi(v)$ as
\begin{align}
\nonumber Q_+(f,f)[\varphi] =&
  \int_{\setR_+} \varphi(v)Q_+\big(f,f\big)(v)\,dv\\[-.25cm]
  \label{eq.qweak}
  \\[-.25cm]
  \nonumber
  =& \frac12 \int_{\setR_+^2} \langle \varphi(v^*) + \varphi(w^*) \rangle
  f(v)f(w)\,dv\,dw.
\end{align}
In reason of \fer{eq.xchg}, the average wealth of the society is
conserved with time, so that
\begin{equation}
  \label{eq.conserve}
  m(t) = \int_{\setR_+} w f(w,t)\,dw = m,
\end{equation}
where $m>0$ is finite. 
A useful way of writing equation \fer{eq.boltzmann} is to resort to the Fourier
transform \cite{PT13}. Assuming the initial distribution of wealth in $P_s^+$, with $s
>1$,  the transformed
kernel reads
\begin{equation}
  \label{trascoll}
  \widehat{Q}\bigl( \ff,\ff \bigr)(\xi) = \ff\left( \frac{1-\lambda}2\xi\right) \ff\left( \frac{1+\lambda}2\xi\right),
    \end{equation}
where, since the initial density has a bounded mean,
\[
\widehat{f_0}^\prime(0) = - im .
\]
Hence, the Boltzmann equation  \fer{eq.boltzmann} can be rewritten in terms of the Fourier transform of $f(v,t)$ as 
 \begin{equation}
  \label{fkac}
  \frac{\partial \widehat f(\xi,t)}{\partial t}  + \widehat f(\xi,t) = \ff\left( \frac{1-\lambda}2\xi, t\right) \ff\left( \frac{1+\lambda}2\xi, t\right).
\end{equation}
It is immediate to show that  the functions $e^{-i\mu\xi}$,  with $\mu >0$, namely the Fourier transforms of a Dirac delta concentrated at the  wealth $\mu$, are stationary solutions of equation \fer{fkac}. 

More can be said if we assume that $s \ge 2$. Then, the  moment of order two of the initial distribution is finite, and, applying \fer{trascoll} with $\varphi(v) = (v-m)^2$, and recalling that the mean value is preserved during the evolution, shows that the variance
of $f(v,t)$ satisfies
\begin{equation}
  \frac{d}{dt} \int_{\setR_+}  (v-m)^2 f(v,t)\,dv = - \frac12(1-\lambda^2) \int_{\setR_+}  (v-m)^2 f(v,t)\,dv.
\end{equation}
As a result, all agents tend for large times to become equally rich. Indeed,
the steady state $f_\infty(v)$ is a Dirac delta concentrated
at the mean wealth, and is approached at the exponential rate
$(1-\lambda^2)/2$.

To remain into the framework of inequality indices, the previous result implies that, if the initial distribution belongs to $P_s^+$, with $s \ge 2$, the coefficient of variation is monotonically decreasing towards zero at the explicit rate $(1-\lambda^2)/4$. 

This result is lost as soon as the value of $s$ is less than $2$. It is however interesting to remark that the inequality index $T(F(t))$, where $F(v,t)$ is the probability measure associated to the solution $\ff(\xi,t)$ of equation \fer{fkac}, is monotonically decreasing in time as soon as $s \ge 1$. Indeed, if we set
\[
h(\xi,t) = \ff(\xi, t) - \frac {\ff'(\xi,t)}{\ff'(0)}, 
\] 
it is immediate to show  that $h(\xi,t)$ satisfies the equation
\[
 \frac{\partial h(\xi,t)}{\partial t}  + h(\xi,t) = \ff\left( \frac{1-\lambda}2\xi, t\right) \ff\left( \frac{1+\lambda}2\xi, t\right) - \frac{ \left[\ff\left( \frac{1-\lambda}2\xi, t\right) \ff\left( \frac{1+\lambda}2\xi, t\right)\right]'}{\ff'(0,t)} ,
\]
which implies
\be\label{med}
\left| \frac{\partial h(\xi,t)}{\partial t}  + h(\xi,t)\right| \le \sup_\xi\left|\ff\left( \frac{1-\lambda}2\xi, t\right) \ff\left( \frac{1+\lambda}2\xi, t\right) - \frac{ \left[\ff\left( \frac{1-\lambda}2\xi, t\right) \ff\left( \frac{1+\lambda}2\xi, t\right)\right]'}{\ff'(0,t)}\right|.
\ee
If now $X(t)$ and $Y(t)$ are random variables with probability measures of Fourier transforms $\ff\left( \xi\,(1-\lambda)/2, t\right)$ (respectively $\ff\left( \xi\,(1+\lambda)/2, t\right)$), which have mean values $m\,(1-\lambda)/2$ (respectively $m\,(1+\lambda)/2$), formula \fer{new-con} for convolutions gives
\be\label{vab}
T(X(t) + Y(t)) \le \frac{1-\lambda}2 T(X(t)) + \frac{1-\lambda}2 T(Y(t)).
\ee
On the other hand, by scaling invariance $T(X(t)) = T(Y(t)) = T(Z(t)$, where the probability measure of $Z(t)$ has Fourier transform $\ff(\xi,t)$. Hence, equation \fer{med} implies
\be\label{fin}
\left| \frac{\partial h(\xi,t)}{\partial t}  + h(\xi,t)\right| \le \sup_\xi\left| h(\xi,t)\right|,
\ee
that, for any given $t_0 <t$ by Gronwall inequality implies \cite{PT13}
\[
T(Z(t)) \le T(Z(t_0),
\]
and, consequently the monotonicity in time of the inequality index $T(F(t)$ of the probability measure solution of the kinetic equation \fer{eq.boltzmann}. It is remarkable that this result, which does not require the condition $s >1$, is a direct consequence of the convolution property of the inequality index $T$. Hence, the monotonicity result does not hold if we resort to Gini and Pietra indices.

\section{Conclusions}
Inequality indices are quantitative scores that take values in the unit interval, with the zero score characterizing perfect equality. Measuring the statistical heterogeneity of  measures arises in most fields of science and engineering, which makes it important to know the strengths and possible weaknesses of heterogeneity measures in applications  \cite{Ban, BL, Cou,Cow,Eli, HN, HR}.  In this paper, we draw attention to a new inequality index, based on the Fourier transform, which exhibits a number of interesting properties that make it very promising in applications. In comparison with the well-known and widely used Gini index, which can still be expressed by resorting to Fourier transform, the new index $T$ allows to compute explicitly the heterogeneity of various probability measures, like the Poisson distribution, which can not be measured explicitly resorting to Gini index. Also, this new Fourier-based index has an interesting  property of  sub-additivity for convolutions, which in principle makes it interesting for applications to models of kinetic theory which contain mass and mean preserving bilinear operators \cite{PT13}.


\section*{Acknowledgement} This work has been written within the
activities of GNFM group  of INdAM (National Institute of
High Mathematics), and partially supported by  IMATI (Institute of Applied Mathematics and Information Technologies Enrico Magenes).

\vskip 3cm


\begin{thebibliography}{10}

\bibitem{Ang}
 J. Angle.
The surplus theory of social stratification and the size
distribution of personal wealth.
 \emph{Social Forces\/}, {\bf 65} (2)   293--326, 1986.

\bibitem{Ang1}
J.  Angle. The inequality process as a wealth maximizing process.
\emph{Physica A}, \textbf{367}  388--414, 2006.

\bibitem{Au1}
 G. Auricchio, A. Codegoni, S. Gualandi, G. Toscani, and M. Veneroni. On the equivalence between Fourier-based and Wasserstein metrics. \emph{Rend. Lincei Mat. Appl.},  \textbf{31} 627--649, 2020.
 
 \bibitem{Au2}
G. Auricchio, A. Codegoni, S. Gualandi, and
L. Zambon. The Fourier discrepancy function. \emph{Commun. Math. Sci.}, to appear, 2022.

\bibitem{Ban}
S. Banerjee, B.K. Chakrabarti, M. Mitra, and S. Mutuswami. Inequality measures: the Kolkata index in comparison with other measures. \emph{Front. Phys.}, \textbf{8} 562182,  2020.

\bibitem{Ban2}
S. Banerjee, B.K. Chakrabarti, M. Mitra, and S. Mutuswami. On the Kolkata index as a measure of income inequality. \emph{Physica A}, \textbf{545} 123178, 2020.

\bibitem{BL}
G. Betti, A. Lemmi (Eds.). \emph{Advances on Income Inequality and Concentration Measures}. Routledge, New-York,2008.

\bibitem{Bon}
C.E. Bonferroni. \emph{Elementi di statistica generale}. Libreria Seber, 1930.

\bibitem{Carrillo:2007}
J.A. Carrillo,  and G. Toscani, \emph{Contractive probability metrics and asymptotic behavior of dissipative kinetic equations}, {Riv. Mat. Univ. Parma}, \textbf{6}  75--198, 2007.

\bibitem{CC00}
A. Chakraborti, and B.K. Chakrabarti. Statistical mechanics
of money: How saving propensity affects its distributions.
  \emph{Eur.~Phys.~J.~B.\/},  \textbf{17}  167--170, 2000.
  
\bibitem{Cou}
P.B. Coulter. \emph{Measuring inequality: A methodological handbook}. Westview Press, Boulder,1989.

\bibitem{Cow} 
F. Cowell. \emph{Measuring inequality.} Oxford University Press, Oxford, 2011.

\bibitem{DY00}
A. Dr\v{a}gulescu, and  V.M. Yakovenko.
   Statistical mechanics of money.
  \emph{Eur.~Phys.~Jour.~B\/}, \textbf{17} 723--729, 2000.

\bibitem{Eli2}
 I. Eliazar, and I.M. Sokolov.
Measuring statistical heterogeneity: The Pietra index. \emph{Physica A}, \textbf{389}   117--125, 2010

\bibitem{Eli}
I. Eliazar. A tour of inequality. \emph{Annals of Physics},  \textbf{389}   306--332, 2018.

\bibitem{Eli3}
I. Eliazar, and G.M. Giorgi. From Gini to Bonferroni to Tsallis: an inequality-indices trek. \emph{Metron},  \textbf{78} 119--153, 2020.


\bibitem{GTW}
G.~Gabetta, G.~Toscani, and B.~Wennberg.
\newblock {Metrics for probability measures and the trend to equilibrium
  for solutions of the {B}oltzmann equation}.
\newblock \emph{J. Statist. Phys.},  {\bf 81}  901--934, 1995.

\bibitem{Gini1}
C. Gini.  Sulla misura della concentrazione e della variabilit\`a dei caratteri. \emph{Atti del Reale Istituto Veneto di Scienze, Lettere ed Arti}, \textbf{73}  1203--1248, 1914. English translation in \emph{Metron},  3--38, 2005.

\bibitem{Gini2}
C. Gini. Measurement of inequality of incomes. \emph{Economic Journal},  \textbf{31}  124--126, 1921.

\bibitem{Gho}
A. Ghosh, N. Chattopadhyay, and B.K. Chakrabarti. Inequality in societies, academic institutions and science journals: Gini and k-indices. \emph{Physica A},   \textbf{410}  30--34, 2014.

\bibitem{HN} 
L. Hao, D.Q. Naiman. \emph{Assessing inequality}. Sage, Los Angeles, 2010.


\bibitem{HR}
N. Hurley, and S. Rickard. 
Comparing measures of sparsity. 
\emph{IEEE Transactions on Information Theory},  \textbf{55} (10)  4723--4741, 2009. 

\bibitem{IKR}
S. Ispolatov,  P.L. Krapivsky, and  S. Redner.
   Wealth distributions in asset exchange models.
  \emph{Eur.~Phys.~Jour.~B\/}, \textbf{2}  267--276, 1998.

\bibitem{Lor}
M.  Lorenz. Methods of measuring the concentration of wealth. \emph{Publ. Amer. Stat.  Ass.},
\textbf{9} 209--219, 1905. 

\bibitem{MaTo07}
D.~Matthes, and G.~Toscani.
\newblock {On steady measures of kinetic models of conservative economies}
\newblock \emph{J. Statist. Phys.},  {\bf 130} 1087--1117, 2008.

\bibitem{Ober}
F. Oberhettinger. \emph{Fourier transforms of distributions and their inverses: A collection of tables}. Academic Press.
Cambridge, Ma, 1973.

\bibitem{PT13}
L.~Pareschi, and G.~Toscani.
\newblock \emph{Interacting multiagent systems. Kinetic equations \& Monte Carlo
  methods},
\newblock Oxford University Press, Oxford, 2013.


\bibitem{Pie}
G. Pietra. Delle relazioni tra gli indici di variabilit\`a. Nota I. \emph{Atti Regio Istituto Veneto di Scienze, Lettere ed Arti}, \textbf{74} Parte II,  775--792, 1914--15. 

\bibitem{Ske}
J.G. Skellam. The frequency distribution of the difference between two Poisson variates belonging to different populations. \emph{Journal of the Royal Statistical Society}, Series A \textbf{109} (3) 296, 1946.

\bibitem{Ske2}
J.G. Skellam. Random dispersal in theoretical populations. \emph{Biometrika}, \textbf{38} (1/2) 196--218, 1951.

\bibitem{TV}
G. Toscani, and C. Villani, \emph{Probability Metrics and Uniqueness of the Solution to the Boltzmann Equation for a Maxwell Gas.} {J.  Statist. Phys.}, \textbf{94}
 619--637, 1999.

\bibitem{Xu} 
K. Xu.  How has the literature on Gini's index evolved in the past 80 years? \emph{SSRN Electronic Journal}, 2003.

\bibitem{Zolo}
V.M. Zolotarev. Metric distances in spaces of random variables and their measures. \emph{Math. USSR-Sb}, \textbf{30} 373--40, 1976.

\bibitem{Zol} V.M. Zolotarev. \emph{One-dimensional stable distributions}. American Mathematical Society, 1986




\end{thebibliography}
\end{document}